\newtheorem{theo}{Theorem}[section]
\newtheorem{defi}{Definition}[section]
\begin{document}

\preprint{AIP/123-QED}

\title{Infinite Ergodicity that Preserves the Lebesgue Measure}

\author{Ken-ichi Okubo}
\email{okubo@ist.osaka-u.ac.jp}
\altaffiliation[Present adress:]{Department of Information and Physical Sciences, 
Graduate School of Information Science and Technology, Osaka University 1-5 Yamadaoka, Suita, Osaka 565-0871 Japan}
\author{Ken Umeno}%
\email{umeno.ken.8z@kyoto-u.ac.jp}
\affiliation{%
 Department of Applied Mathematics and Physics, 
	Graduate School of Informatics, Kyoto University, Yoshida-honmachi, Sakyo-ku, Kyoto 606-8501 Japan
}%

\date{\today}

\begin{abstract}
We proved that for the countably infinite number of one-parameterized one dimensional dynamical systems, they preserve the Lebesgue measure and they are ergodic for the measure (infinite ergodicity).  Considered systems connect the parameter region in which dynamical systems are exact and the parameter region in which systems are dissipative, and correspond to the critical points of the parameter in which weak chaos occurs (the Lyapunov exponent converges to zero). These results are the generalization of the work by R. Adler and B. Weiss. We show that the distributions of normalized Lyapunov exponent for these systems obey the Mittag-Leffler distribution of order $1/2$ by numerical simulation.
\end{abstract}

\maketitle

\begin{quotation}
As a typical characteristics of ergodicity, the equality of the time average and the space average is pointed out.
However, there exist phenomena in which the time average is not equivalent to the space average in infinite ergodic systems \cite{akimoto2010subexponential}. The Boole transformation is known as a one dimensional map \cite{adler1973ergodic}
, which is proven that the transformation preserves the Lebesgue measure (infinite measure) and is ergodic.
In this paper it is proven that countably infinite number of one parameterized one dimensional maps which are generalized from the Boole transformation exactly preserve the Lebesgue measure and are ergodic at certain parameters. 
Additionally we show that in these maps the normalized Lyapunov exponent obeys the Mittag-Leffler distribution of order $1/2$
as well as the Boole transformation.
\end{quotation}

\section{Introduction}
Chaos theory has developed statistical physics through ergodic theory.
In a chaotic dynamics, it is difficult to predict future orbital state from past information
because the system is unstable, which is characterized by the sensitivity to initial conditions. However, from its mixing property, one can characterize the system statistically using the invariant density function.
The relation between microscopic dynamics and density function is important when 
macroscopic properties are led from microscopic dynamics, and ergodicity plays a
significant role in this derivation. 

In the case of a dynamical system $(X, T, \mu)$ with a normalized ergodic invariant measure $\mu$
where $X$ and $T$ represent the phase space and a map, respectively, for an observable 
$f \in L^1(\mu)$, a time average $\lim_{n\to\infty}\frac{1}{n}\sum_{i=0}^{n-1}f\circ T^i(x)$
converges to the phase average $\int_X f d\mu$ in almost all region \cite{birkhoff1931proof}.
In systems with a normalized ergodic measure, we can characterize their stability by Lyapunov exponent $\lambda$, which is
defined as $\lambda \overset{\rm def}{=} \lim_{n\to \infty}\frac{1}{n}\sum_{i=0}^{n-1}\log\left|T'(x_i)\right|$ 
when $\log\left|T'(x_i)\right|$ is a $L^1$ class function
for the measure $\mu$.
Usually, since it is difficult to obtain the information at infinite time, we use numerical simulations or apply the 
ergodicity to calculate the Lyapunov exponent as $\lambda = \int_X \log\left|T'(x)\right|d\mu$.
For example, for logistic map $x_{n+1}= ax_n(1-x_n)$, the Lyapunov exponent $\lambda$ at $a=4$ is obtained as
$\lambda = \int_0^1 4(1-2x)\frac{dx}{\pi\sqrt{x(1-x)}}=\log2$ \cite{jakobson1981absolutely}, and for the generalized Boole transformations 
$x_{n+1} = \alpha x_n -\frac{\beta}{x_n}, \alpha>0, \beta>0$ \cite{aaronson1997introduction,umeno2016exact}, $\lambda$ is obtained analytically as
$\lambda = \int_{-\infty}^\infty \left(\alpha + \frac{\beta}{x^2}\right)\frac{\sqrt{\beta(1-\alpha)}}{\pi\left\lbrace x^2(1-\alpha)+\beta\right\rbrace}dx = \log \left(1+2\sqrt{\alpha(1-\alpha)}\right)$
for $0<\alpha<1$. We know other systems whose invariant ergodic measure can be expressed explicitly \cite{umeno1998superposition,umeno2016ergodic}.

On the other hand, how about the case of infinite ergodic case? Consider the Boole transformation $x_{n+1}=T(x_n)=x_n-1/x_n$ \cite{boole1857xxxvi,adler1973ergodic},
which corresponds to $\alpha=\beta=1$ for the generalized Boole transformations where the dynamical system preserves the Lebesgue measure as an infinite ergodic measure. 
That means it holds that $\int_{-\infty}^\infty f(x) dx = \int_{-\infty}^\infty f\left(x-\frac{1}{x}\right)dx$ where
$f$ is a $L^1$ function with respect to $dx$.
For an observable $\log\left|T'\right|$, although the usual time average $\lim_{n\to\infty}\frac{1}{n}\sum_{i=0}^{n-1}\log\left|T'(x_i)\right|$ converges to zero \cite{akimoto2010subexponential}, the phase average is as 
$\int_{-\infty}^\infty \log\left|1+1/x^2\right|dx=2\pi$ \cite{akimoto2010subexponential},
so that the time average does not consistent with the phase average.

In infinite ergodic systems, the Darling, Kac and Aaronson theorem says that
if the observable $f$ is positive $L^1$ function in terms of invariant measure $\mu$,
the time average using the return sequence $a_n$ converges in distribution \cite{aaronson1997introduction}.
In the case of the Boole transformation, by defining the return sequence $a_n \overset{\rm def}{=} \frac{\sqrt{2n}}{\pi}$,
the distribution of $\frac{1}{a_n}\sum_{i=0}^{n-1}\log\left|T'(x_i)\right|$
converges to the Mittag-Leffler distribution of order $1/2$ \cite{aaronson1997introduction}.
That is, interesting phenomena are observed which are different from the usual ergodic theory and the standard statistical mechanics.

In infinite measure system, it is known that following $L^1$ class observables
converge to the Mittag-Leffler distribution such as Lempel-Ziv complexity \cite{shinkai2006lempel}, 
the transformed observation function for the correlation function \cite{akimoto2007new}, 
normalized Lyapunov exponent \cite{akimoto2010subexponential}, 
normalized diffusion coefficient \cite{akimoto2010role}
and that non-$L^1$ class observables such as time average of position \cite{korabel2012infinite}
converges to generalized arc-sin distribution 
\cite{thaler2002limit,thaler2006distributional,akimoto2008generalized}
or other distribution \cite{akimoto2015distributional}.

Infinite densities were observed in the context with the long time limit of solution
of Fokker-Planck equation for Brownian motion \cite{kessler2010infinite,dechant2011solution} and semiclassical Monte Carlo simulations of cold atoms \cite{holz2015infinite}.

In order to characterize the instability of systems with infinite measure, several quantities were invented such as
Lyapunov pair \cite{akimoto2010subexponential} and generalized Lyapunov exponent  \cite{korabel2009pesin,korabel2010separation,korabel2013numerical}.

In relation to Lyapunov exponent, the change of stability of systems characterizes their dynamical properties and is important phenomenon.
In particular, critical phenomena at which systems become unstable from stable called as
routes to chaos has attracted a lot of interests in the fields of Hamiltonian dynamics \cite{hioe1987stability}, intermittent systems \cite{manneville1979intermittency,pomeau1980intermittent,ott1994blowout,lamba1994scaling}, logistic map \cite{huberman1980scaling}, a differential equation \cite{milosavljevic2017analytic}, 
coupled  chaotic  oscillators \cite{liu2003universal}, a noise-induced system \cite{crutchfield1981scaling}, experiments(Belousov-Zhabotinskii reaction, Rayleigh-B\'enard convection, and Couette-Taylor flow) \cite{swinney1983observations} and optomechanics \cite{he2004critical,coillet2014routes,bakemeier2015route}.

For generalized Boole transformations, at the onset of chaos the Lyapunov exponent
defined by the time average, converges to zero as $\alpha \to 1$. The point $\alpha_c = 1$ is referred to as the critical point at which \textit{Type} 1 intermittency occurs \cite{umeno2016exact}. Since the parameter dependence at the critical point \textit{diverges} as $\lim_{\alpha \to 1-0}\left|\frac{\partial}{\partial \alpha}\left(1+2\sqrt{\alpha(1-\alpha)}\right)\right|=\infty$, we know that it is difficult to obtain the correct Lyapunov exponent by numerical experiments. The bahavior at $\alpha=\beta=1$ is explained by the Boole transformation in which the Lyapunov exponent derived from the time average does not consist with that derived from the phase average although the system is ergodic.

The authors previously extended the generalized Boole transformations by defining countably infinite number of one-parameterized maps, which are called super generalized Boole (SGB) transformations  \cite{okubo2018universality} and showed that the Lyapunov exponent converges to zero from positive value as $\alpha\to 1$ and \textit{Type} 1 intermittency occurs at $\alpha=1$ for countably infinite number of maps (SGB). 
The authors proved that the SGB transformations are \textit{exact} when $(K, \alpha)$ are in Range A. However, the ergodic property at $\alpha=\pm1$ of SGB ($K \geq 3$) was left unresolved.
In this paper, we show that all the super generalized Boole (SGB) transformations at $\alpha=\pm 1$ also preserve the Lebesgue measure and is proven to be ergodic as same as the Boole transformation. 
If we look at the foundation of statistical mechanics, the Liouville measure on $\mathbb{R}^{2N}$ is vitally important and this can  be regarded as the Lebesgue masure which is invariant under Hamiltonian dynamical system with $N$ degrees of freedom \cite{landau1970statistical,gallavotti2014nonequilibrium}.
Thus, it is of great interest to investigate  ``\textit{ergodic}'' Lebesgue measure on $\mathbb{R}$ which is invariant under nonlinear transformations from the physical point of view.

\section{Super Generalized Boole transformations}
In this section, let us define the super generalized Boole (SGB) transformations \cite{okubo2018universality}. 
At first, define a function $F_K:\mathbb{R}\backslash A \to \mathbb{R}\backslash A$
such as 
\begin{equation}
F_K(\cot\theta)\overset{\rm def}{=} \cot (K\theta)
\end{equation}
where $K\in \mathbb{N}\backslash \{1\}$ and $A$ represents a set of point $x \in\mathbb{R}$ such that
for finite iteration $n\in \mathbb{Z}$, $F_K^n(x)$ reaches the singular point.

$F_K$ corresponds to the $K$-angle formula of cot function.
For example, $F_2(x) = \frac{1}{2}(x-\frac{1}{x})$ corresponds to the 
$\cot(2\theta)=\frac{1}{2}\left(\cot\theta-\frac{1}{\cot\theta}\right)$ \cite{umeno2016ergodic}.

Then, super generalized Boole transformations $S_{K, \alpha}: \mathbb{R}\backslash B \to \mathbb{R}\backslash B$ are defined as follows.
\begin{equation}
x_{n+1}=S_{K, \alpha}(x_n) \overset{\rm def}{=} \alpha K F_K(x_n),
\end{equation}
where $|\alpha|>0$, $K\in \mathbb{N}\backslash\{1\}$ and 
$B$ represents a set of point $x \in\mathbb{R}$ such that
for finite iteration $n\in \mathbb{Z}$, $S_{K,\alpha}^n(x)$ reaches the singular point.

Let us define Range A as
$``0<\alpha<1 ~\mbox{and}~ K=2N" ~\mbox{or}~ ``\frac{1}{K^2}<\alpha<1 ~\mbox{and}~ K=2N+1"$
where $N\in \mathbb{N}$. 
When $(K, \alpha)$ are in Range A,
the super generalized Boole (SGB) transformations are \textit{exact} and when $\alpha>1$, the any orbits
diverge to the infinity and the SGB transformations do not preserve measure over real line \cite{okubo2018universality}. 

In the following one can extend the Range A to the newly defined Range B such that $``0<|\alpha|<1 ~\mbox{and}~ K=2N" ~\mbox{or}~ ``\frac{1}{K^2}<|\alpha|<1 ~\mbox{and}~ K=2N+1"$ where $N\in \mathbb{N}$. 
Let us define the Range A' as
\begin{equation*}
\left\lbrace
    \begin{array}{ccc}
         -1<\alpha<0 &\mbox{in the case of}& K=2N,  \\
         -1<\alpha<-\frac{1}{K^2}& \mbox{in the case of}& K=2N+1,
    \end{array}
\right.
\end{equation*}
where $N\in \mathbb{N}$. 
In the following extension from $\alpha$ to $|\alpha|$ can be proven in the similar way as the reference 37.
If the density function at the time $n$ ($f_n(x)$) is denoted as 
\begin{equation*}
    f_n(x) = \frac{1}{\pi}\frac{\gamma}{x^2+\gamma^2},
\end{equation*}
$f_{n+1}(x)$ is given by 
\begin{equation*}
    f_{n+1}(x) = \frac{1}{\pi}\frac{|\alpha|KG_K(\gamma)}{x^2+|\alpha|^2K^2G_K^2(\gamma)}
\end{equation*}
according to the Perron-Frobenius equation where $G_K(\gamma)$ corresponds to the the K-angle formula of the $\coth$ function\cite{okubo2018universality}. Then the scale parameter $\gamma$ is changed in a single iteration as 
\begin{equation*}
    \gamma \mapsto |\alpha|K G_K(\gamma).
\end{equation*}
Then, by changing the parameter from $\alpha$ to $|\alpha|$, we can prove straightforwardly that the SGB transformations $\{S_{K, \alpha}\}$ preserve the Cauchy distribution and the scale parameter can be chosen uniquely
when the parameters $(K, \alpha)$ are in Range B.
In terms of exactness, it holds that $\bar{S}_{K,\alpha}'(\theta)<0$ when $(K, \alpha)$ are in the Range A'. Then, $\bar{S}_{K,\alpha}(\theta)$ is also the monotonic function. Thus, we can prove that the SGB transformations $\{S_{K, \alpha}\}$ are exact when the parameters $(K, \alpha)$ are in Range B by considering the intervals $\{I_{j,n}\}$.
In the case of $\alpha<-1$, one straightforwardly sees that orbits diverge to the infinity and the SGB transformations do not preserve measure over real line.

From above discussion, we know that the SGB transformations are \textit{exact}
when the parameters $(K, \alpha)$ are in range B and the systems are disspative when $|\alpha|>1$. However,
the ergodic property at the critical point $\alpha=\pm 1$ was unsettled.

Then, what happens at $\alpha=\pm 1$? Since the statistical properties drastically change 
before and after the value of $\alpha=\pm 1$, the erogidic property of the \textit{critical} SGB transformations at
$\alpha=\pm 1$ is important. As we know, the Boole transformation which corresponds to the case of
$K=2, \alpha=1$ preserves the Lebesgue measure and are ergodic \cite{adler1973ergodic}.
In the following section, we show that all the SGB transformations at $\alpha= \pm 1$ preserve the Lebesgue
measure  for any $K \in \mathbb{N}\backslash\{1\}$.
Table \ref{Tab:form of SGB} shows the explicit 
form of $S_{K, \pm 1}$ for $K=2,3,4,5$ and $6$.

\begin{figure}[htb]
    \centering
    \includegraphics[width=\columnwidth]{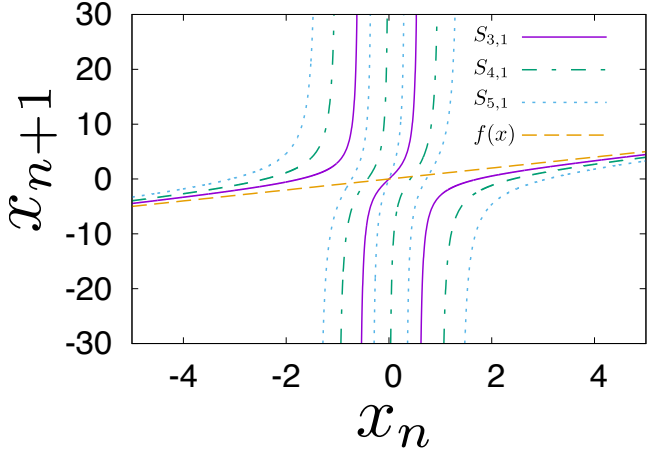}
    \caption{Return maps of $S_{3,1}, S_{4,1}$ and $S_{5,1}$. The function $f(x)=x$ represents
    the set of fixed points.}
    \label{Fig:Return map}
\end{figure}

\begin{table*}
\caption{\label{Tab:form of SGB} $S_{K, \pm1}(x)$ for $K=2,3,4,5$ and $6$.}
\begin{ruledtabular}
\begin{tabular}{ccccccc}
& $K=2$& 3 & 4 & 5 & 6\\
\hline
$S_{K, \pm1}(x)$& ${\displaystyle \pm\left(x-\frac{1}{x}\right)}$ &
${\displaystyle \pm 3 \frac{x^3-3x}{3x^2-1}}$ &
${\displaystyle \pm 4 \frac{x^4-6x^2+1}{4x^3-4x}}$ &
${\displaystyle \pm 5 \frac{x^5-10x^3+5x}{5x^4-10x^2+1}}$ &
${\displaystyle \pm 6 \frac{x^6-15x^4+15x^2-1}{6x^5-20x^3+6x}}$\\
\end{tabular}
\end{ruledtabular}
\end{table*}

\section{Infinite ergodicity for $\alpha=1,-1$}

\begin{theo}
	The SGB transformations at $\alpha=\pm 1$ preserve the Lebesgue measure.
\end{theo}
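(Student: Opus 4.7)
The plan is to verify Lebesgue invariance through the Perron-Frobenius criterion
\[
\sum_{y \in S_{K,\pm 1}^{-1}(x)} \frac{1}{|S_{K,\pm 1}'(y)|} = 1
\]
for almost every $x \in \mathbb{R}$, which is equivalent to preservation of $dx$. Because $S_{K,-1}(x) = -S_{K,1}(x)$ and Lebesgue measure on $\mathbb{R}$ is invariant under the reflection $x \mapsto -x$, I only need to treat $\alpha = +1$; the $\alpha = -1$ case then follows by composing with this reflection, which leaves both the preimage set and $|S'|$ unchanged.

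To exploit the defining conjugacy $F_K(\cot\theta) = \cot(K\theta)$, I would pass to the angular variable $\theta \in (0,\pi)$ via $x = \cot\theta$, so that $S_{K,1}(x) = K\cot(K\theta)$. For a target $y \in \mathbb{R}$, set $\phi = \cot^{-1}(y/K) \in (0,\pi)$; then the preimages of $y$ are indexed by $\theta_k = (\phi + k\pi)/K$ for $k = 0, 1, \ldots, K-1$, with $x_k = \cot\theta_k$. A short chain-rule computation yields $|S_{K,1}'(x)| = K^2 \sin^2\theta / \sin^2(K\theta)$, and since $K\theta_k = \phi + k\pi$ forces $\sin^2(K\theta_k) = \sin^2\phi$ for every $k$, the Perron-Frobenius sum collapses to
\[
\sum_{k=0}^{K-1}\frac{1}{|S_{K,1}'(x_k)|} = \frac{\sin^2\phi}{K^2}\sum_{k=0}^{K-1}\csc^2\theta_k.
\]

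The crucial ingredient is then the classical $K$-fold trigonometric identity $\sum_{k=0}^{K-1}\csc^2(\psi + k\pi/K) = K^2\csc^2(K\psi)$, applied at $\psi = \phi/K$. I would derive it from the factorization $\sin(K\psi) = 2^{K-1}\prod_{k=0}^{K-1}\sin(\psi + k\pi/K)$ by taking two logarithmic derivatives, or equivalently from the partial-fraction expansion of $\pi^2\csc^2(\pi z)$. Substituting yields $\sum_k \csc^2\theta_k = K^2\csc^2\phi = K^2/\sin^2\phi$, and the $\sin^2\phi$ factors cancel, leaving the sum equal to $1$ independently of $y$.

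The main obstacle is not the identity itself, which is standard, but the bookkeeping underneath it: one must verify that the $K$ branches $\theta_k$ exhaust $S_{K,1}^{-1}(y)$ with no spurious or missing roots, and that the exceptional set $B$ of pre-singularities is Lebesgue-null. Both points follow once we note that $\cot:(0,\pi)\to\mathbb{R}$ is a bijection away from the endpoints and that $S_{K,1}$ is a rational function of degree exactly $K$, as visible from the explicit formulas in Table~\ref{Tab:form of SGB}. Conceptually, the hardest decision is choosing the cotangent conjugacy as coordinates; once that move is made, a degree-$K$ rational identity becomes a one-line trigonometric sum that works uniformly for every $K \ge 2$ at $\alpha = \pm 1$.
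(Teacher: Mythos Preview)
Your argument is correct and complete: the Perron--Frobenius criterion $\sum_{y\in S^{-1}(x)}|S'(y)|^{-1}=1$ is indeed equivalent to invariance of the density $1$, the derivative computation $|S_{K,1}'(\cot\theta)|=K^2\sin^2\theta/\sin^2(K\theta)$ is right, and the $\csc^2$ identity finishes it cleanly. The reflection reduction from $\alpha=-1$ to $\alpha=+1$ is also valid.

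However, your route differs from the paper's. The paper works with the \emph{integrated} form of invariance, showing $|S_{K,\pm1}^{-1}(0,\eta)|=\eta$ directly: it locates the $K$ preimages $\xi_0,\dots,\xi_{K-1}$ of $\eta$ and of $0$, proves $\sum_m\cot(\pi/(2K)+m\pi/K)=0$ by a parity pairing argument (split into $K$ even/odd), and then obtains $\sum_m\xi_m=\eta$ from Vieta's formulas applied to the degree-$K$ polynomial $x^K-\eta x^{K-1}+\cdots=0$ arising from $S_{K,1}(x)=\eta$. In other words, the paper uses the first-order identity $\sum_m\cot(\psi+m\pi/K)=K\cot(K\psi)$ implicitly via Vieta, whereas you use its derivative, the second-order identity for $\csc^2$. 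Your approach avoids the even/odd case split and the separate treatment of $\alpha=-1$, and it plugs directly into the transfer-operator formalism; the paper's approach is slightly more elementary in that it never differentiates and needs only the top two coefficients of the rational map.
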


\begin{proof}
	
	The goal is to prove that
	\begin{equation}
	\left|S_{K, \pm 1}^{-1}I\right| = \left|I\right| \label{Ap Eq lebesgue}
	\end{equation}
	for any interval $I$ where $\left|\cdot\right|$ denotes the length of a interval.
	It is sufficient to verify this for intervals of $I=(0, \eta), \eta>0$ and $I=(\eta, 0), \eta<0$ \cite{adler1973ergodic}.
	
	(I) In the case of $\alpha=1$.
	
	(In the following, we will prove Eq. \eqref{Ap Eq lebesgue} holds for $\eta>0$; the proof for $\eta<0$ is similar.)
	The map $S_{K, 1}$ increases monotonically. We have that
	\begin{equation}
	x_{n+1} = S_{K, 1}(x_n) 
	\end{equation}
	and for $x_{n+1}=0$, $x_n = \cot\theta_n, \theta_n \in {\rm arccot}(\mathbb{R}\backslash B) \subset [0, \pi]$ satisfies the following relation:
	\begin{equation}
	\begin{array}{rll}
	K \theta_n &=& \displaystyle \frac{\pi}{2}+m\pi, m \in \mathbb{Z}\\
	\theta_n &=& \displaystyle\frac{\pi}{2K}+ \frac{m}{K}\pi,\\
	0&\leq& \frac{\pi}{2K}+ \frac{m}{K}\pi \leq \pi.
	\end{array}
	\end{equation}
	The range of possible values for $m$ is $m=0, 1,2, \cdots, K-1$.
	Then for $x_n$ such that $x_{n+1}=0$, it follows that
	\begin{equation}
	x_n = \cot\left(\frac{\pi}{2K} + \frac{m}{K}\pi\right), m=0,1,2,\cdots,K-1.
	\end{equation}
	For $x_n$ such that $x_{n+1}=K\cot(K\theta_n)=\eta$, it follows that
	\begin{equation}
	\begin{array}{lll}
	K\theta_n &=& \cot^{-1}\left(\frac{\eta}{K}\right) + m\pi\\
	\theta_n &=& \frac{1}{K}\cot^{-1}\left(\frac{\eta}{K}\right) + \frac{m}{K}\pi,\\
	0&\leq& \frac{1}{K}\cot^{-1}\left(\frac{\eta}{K}\right) + \frac{m}{K}\pi \leq \pi.
	\end{array}
	\end{equation}
	Here, since 
	\begin{equation}
	0<\cot^{-1}\left(\frac{\eta}{K}\right)<\frac{\pi}{2},
	\end{equation}
	the range of possible values for $m$ is given by
	\begin{equation}
	-\frac{1}{2}<-\frac{1}{\pi}\cot^{-1}\left(\frac{\eta}{K}\right) \leq m \leq K - \frac{1}{\pi}\cot^{-1}\left(\frac{\eta}{K}\right)<K,
	\end{equation}
	that is $m=0,1,2,\cdots, K-1$. Then $\theta_n$ and $x_n$ are given by
	\begin{equation}
	\begin{array}{lll}
	    \theta_n &=& \frac{1}{K}\cot^{-1}\left(\frac{\eta}{K}\right) + \frac{m}{K}\pi, m= 0,1,2,\cdots,K-1,\\
	    x_n &=& \cot\left\lbrace\frac{1}{K}\cot^{-1}\left(\frac{\eta}{K}\right) + \frac{m}{K}\pi\right\rbrace
	\end{array}
	\end{equation}
	where $\eta = K\cot(K\theta_n)$.
	Because the $S_{K,1}$ increases monotonically and the $\cot$ function decreases monotonically for $\theta\in[0, \pi]$,
	the interval that is mapped from $(0, \eta)$ by $S_{K, 1}^{-1}$ is 
	\begin{equation}
	\bigcup_{m=0}^{K-1}\left(\cot\left(\frac{\pi}{2K}+\frac{m}{K}\pi\right), \cot\left\lbrace \frac{1}{K}\cot^{-1}\left(\frac{\eta}{K}\right)
	+\frac{m}{K}\pi\right\rbrace \right).
	\end{equation}
	Then we have that
	\begin{equation}
	\begin{array}{lll}
	&&\left|S_{K, 1}^{-1}(0, \eta)\right|\\
	 &=& \displaystyle\sum_{m=0}^{K-1}\left[
	\cot\left\lbrace \frac{1}{K}\cot^{-1}\left(\frac{\eta}{K}\right)+\frac{m}{K}\pi\right\rbrace -
	\cot\left(\frac{\pi}{2K}+\frac{m}{K}\pi\right)
	\right].
	\end{array}
	\end{equation}
	In the following discussion, we consider $\displaystyle\sum_{m=0}^{K-1}\cot\left(\frac{\pi}{2K}+\frac{m}{K}\pi\right)$.
	
	(i) Case $K=2N$.
	For $\displaystyle\sum_{m=0}^{K-1}\cot\left(\frac{\pi}{2K}+\frac{m}{K}\pi\right)$, adding the terms corresponding to
	$m=0$ and $m=K-1$, we obtain
	\begin{equation}
	\begin{array}{lll}
	&&\cot\left(\frac{\pi}{2K}\right) + \cot\left(\frac{\pi}{2K}+\frac{K-1}{K}\pi\right)\\
	&=&
	\cot\left(\frac{\pi}{2K}\right) + \cot\left(\pi-\frac{\pi}{2K}\right) =0.
	\end{array}
	\end{equation}
	Adding the terms corresponding to $m=l$ and $m=K-1-l$, $l=0,\cdots,\frac{K}{2}-1$, we obtain
	\begin{equation}
	\cot\left(\frac{(2l+1)\pi}{2K}\right)+\cot\left(\pi-\frac{(2l+1)\pi}{2K}\right) =0.
	\end{equation}
	Thus, for $K=2N$, the following relation holds:
	\begin{equation}
	\sum_{m=0}^{K-1}\cot\left(\frac{\pi}{2K}+\frac{m}{K}\pi\right)=0.
	\end{equation}
	
	\begin{widetext}
	(ii)Case $K=2N+1$.
	We have
	\begin{equation}
	\begin{array}{lll}
	\displaystyle\sum_{m=0}^{K-1}\cot\left(\frac{\pi}{2K}+\frac{m}{K}\pi\right) &=&
	\displaystyle\sum_{m=0}^{\frac{K-3}{2}}\cot\left(\frac{\pi}{2K}+\frac{m}{K}\pi\right)
	+\cot\left(\frac{K-1+1}{2K}\pi\right)+
	\sum_{m=\frac{K+1}{2}}^{K-1}\cot\left(\frac{\pi}{2K}+\frac{m}{K}\pi\right)\\
	&=&\displaystyle\sum_{m=0}^{\frac{K-3}{2}}\cot\left(\frac{\pi}{2K}+\frac{m}{K}\pi\right)+
	\sum_{m=\frac{K+1}{2}}^{K-1}\cot\left(\frac{\pi}{2K}+\frac{m}{K}\pi\right).
	\end{array}
	\end{equation}
	\end{widetext}
	Much as in (i), because the term corresponding to $m=l$ negates the term corresponding to
	$m=K-1-l$, $l=0,\cdots,\frac{K-3}{2}$, it follows that
	\begin{equation}
	\displaystyle\sum_{m=0}^{K-1}\cot\left(\frac{\pi}{2K}+\frac{m}{K}\pi\right) = 0.
	\end{equation}
	Thus, we have that
	\begin{equation}
	\left|S_{K, 1}^{-1}(0, \eta)\right| = \sum_{m=0}^{K-1}\cot\left\lbrace \frac{1}{K}\cot^{-1}\left(\frac{\eta}{K}\right)+\frac{m}{K}\pi\right\rbrace.
	\label{Eq: sum1}
	\end{equation}
	
	In the following discussion, we calculate Eq. \eqref{Eq: sum1}.
	Let the $K$ roots $x_n$ of the equation $\eta = S_{K, 1}x_n$ be denoted 
	$\xi_i, i=0,\cdots,K-1$.
	Because the map $S_{K, 1}(x)$ corresponds to the $K$-angle formula of the cot function,
	$\eta$ is given by
	\begin{equation}
	\begin{array}{lll}
		\eta &=& x_{n+1} = S_{K,1}(x_n)\\
		 &=& K\frac{x_n^K+\mbox{($K-2$ th and the smaller order terms)}}{Kx_n^{K-1}+\mbox{($K-3$ th and the smaller order terms)}}
	\end{array}
	\end{equation}
	where $x_{n+1}= \eta$. Then it follows that
	\begin{equation}
	x_n^K-\eta x_n^{K-1}+(\mbox{$K-2$ th and the smaller order terms})=0.
	\end{equation}
	Because by definition $\xi_i$ is a root of the above $K$th-degree equation,
	it follows that $(x_n-\xi_0)(x_n-\xi_1)\cdots(x_n-\xi_{K-1})=0$.
	According to the relation between the roots and coefficients of a $K$th-degree equation,
	we have that
	\begin{equation}
	\eta = \sum_{m=0}^{K-1}\xi_m = \sum_{m=0}^{K-1}\cot\left\lbrace \frac{1}{K}\cot^{-1}\left(\frac{\eta}{K}\right)+\frac{m}{K}\pi\right\rbrace.
	\end{equation}
	Therefore, because
	\begin{equation}
	\left|S_{K, 1}^{-1}(0, \eta)\right|  = \eta,
	\end{equation}
	Eq. \eqref{Ap Eq lebesgue} holds.
	
	(II) In the case of $\alpha=-1$.
	
	Consider the case $\eta>0$ as in (I).
	Because the map $S_{K,-1}$ decreases monotonically,
	{\small
	\begin{equation}
	\begin{array}{lll}
	&&\left|S_{K, -1}^{-1}(0, \eta)\right|\\
	 &=& \displaystyle \sum_{m=0}^{K-1}\left[
	\cot\left(\frac{\pi}{2K}+\frac{m}{K}\pi\right)-
	\cot\left\lbrace \frac{1}{K}\cot^{-1}\left(\frac{-\eta}{K}\right)+\frac{m}{K}\pi\right\rbrace
	\right]\\
	&=&-\displaystyle \sum_{m=0}^{K-1}\cot\left\lbrace \frac{1}{K}\cot^{-1}\left(\frac{-\eta}{K}\right)+\frac{m}{K}\pi\right\rbrace.
	\end{array}
	\end{equation}
}
	For the map $S_{K,-1}$, the following relation holds:
	\begin{equation}
	\begin{array}{lll}
	\eta = -K\frac{x_n^K+\mbox{($K-2$ th and the smaller order terms)}}{Kx_n^{K-1}+\mbox{($K-3$ th and the smaller order terms)}}\\
	x_n^K + \eta x_n^{K-1}+(\mbox{$K-2$ th and the smaller order terms})=0.
	\end{array}
	\end{equation}
	According to the relation between the roots and coefficients of a $K$th-degree equation,
	we have the relation:
	\begin{equation}
	\begin{array}{rll}
	-\eta = \displaystyle\sum_{m=0}^{K-1}\xi_m = \sum_{m=0}^{K-1}\cot\left\lbrace \frac{1}{K}\cot^{-1}\left(\frac{-\eta}{K}\right)+\frac{m}{K}\pi\right\rbrace\\
	\therefore \displaystyle-\sum_{m=0}^{K-1}\cot\left\lbrace \frac{1}{K}\cot^{-1}\left(\frac{-\eta}{K}\right)+\frac{m}{K}\pi\right\rbrace = \eta.
	\end{array}
	\end{equation}
	Therefore, it follows that
	\begin{equation}
	\left|S_{K, -1}^{-1}(0, \eta)\right| = 
	\displaystyle-\sum_{m=0}^{K-1}\cot\left\lbrace \frac{1}{K}\cot^{-1}\left(\frac{-\eta}{K}\right)+\frac{m}{K}\pi\right\rbrace = \eta
	\end{equation}
	and Eq. \eqref{Ap Eq lebesgue} holds.
\end{proof}

At $\alpha= \pm 1$, the SGB transformations preserve the Lebesgue measure
for any $K \geq 2$. Thus, for the SGB transformations
the measure for the whole set cannot be normalized to unity. Then we define the ergodicity for the system with the infinite measure as follows
(according to Definition \ref{Def:ergodicity}):
\begin{defi}[ergodicity \cite{lasota2008probabilistic}] \label{Def:ergodicity}
	Let $(X, \mathscr{A}, \mu)$ be a measure space. $S:X \to X$ is a measurable transformation on the measure space $(X, \mathscr{A}, \mu)$.
	The transformation $S$ is called ergodic if every invariant set $A\in \mathscr{A}$ is such that either $\mu(A)=0$ or $\mu(X\backslash A)=0$.
\end{defi}

\begin{theo}\label{Theo:Infinit ergodicity}
	The SGB transformations at $\alpha=\pm 1$ are ergodic.
\end{theo}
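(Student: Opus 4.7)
The plan is to transfer the dynamics to angular coordinates, where $S_{K,\pm 1}$ becomes a full-branch Markov endomorphism, and then apply the standard ergodicity argument for such maps, in the spirit of the Adler--Weiss proof for the Boole case $K=2$. Conjugating by $x=\cot\theta$, $\theta\in(0,\pi)$, yields a measurable bijection that pushes Lebesgue measure on $\mathbb{R}$ to the $\sigma$-finite measure $d\nu = d\theta/\sin^2\theta$ on $(0,\pi)$, and turns $S_{K,\pm 1}$ into $\bar S(\theta)=\cot^{-1}\bigl(\pm K\cot(K\theta)\bigr)$. By Theorem 3.1 this $\bar S$ preserves $\nu$. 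Since ergodicity depends only on the measure class and $d\nu$ is equivalent to Lebesgue $d\theta$ on every compact subinterval of $(0,\pi)$, it suffices to show that $\bar S$ has no nontrivial invariant set of positive Lebesgue measure in $(0,\pi)$.

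Next I would extract the Markov partition already implicit in the proof of Theorem 3.1. On each of the $K$ open intervals $J_m=\bigl(m\pi/K,(m+1)\pi/K\bigr)$, $m=0,\dots,K-1$, the map $\bar S$ is smooth, strictly monotone, and a diffeomorphism onto $(0,\pi)$: this follows from decomposing $\bar S$ as $\theta\mapsto K\theta$ (bijection $J_m\to(m\pi,(m+1)\pi)$), $\cot$ ($\pi$-periodic decreasing bijection onto $\mathbb{R}$), multiplication by $\pm K$, and $\cot^{-1}$ (decreasing bijection $\mathbb{R}\to(0,\pi)$). Iterating produces a generating refinement $\bigvee_{k=0}^{n-1}\bar S^{-k}\{J_m\}$ of $K^n$ cylinder intervals on each of which $\bar S^n$ is a monotone bijection onto $(0,\pi)$.

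With this structure in hand, ergodicity follows by a standard Lebesgue density-point argument. Let $A\subset(0,\pi)$ be $\bar S$-invariant with positive Lebesgue measure and fix a Lebesgue density point $\theta^{\ast}$ of $A$ lying in some compact subinterval $[\delta,\pi-\delta]$. The nested cylinder intervals $C_n\ni\theta^{\ast}$ shrink to $\{\theta^{\ast}\}$, so $|A\cap C_n|/|C_n|\to 1$. A bounded distortion estimate on the restriction of $\bar S^n$ to $C_n$ transfers this density relation to the image, giving $|A|/|(0,\pi)|\ge\liminf_n |A\cap C_n|/|C_n|=1$, so $A$ has full measure in $(0,\pi)$. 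Conjugating back by $x=\cot\theta$ yields ergodicity of $S_{K,+1}$; the case $\alpha=-1$ is handled identically since sign reversal on each branch preserves the Markov structure.

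The main obstacle is that $\theta=0,\pi$ are indifferent fixed points of $\bar S$ (a local computation gives $\bar S(\theta)-\theta = O(\theta^3)$ near $0$), so a uniform distortion bound across all of $(0,\pi)$ is unavailable; these are precisely the ``neutral'' fibers over $x=\pm\infty$ that make $\nu$ infinite and place the map in the Pomeau--Manneville class. The clean workaround is to induce on a compact core $[\delta,\pi-\delta]$: the first-return map is uniformly expanding with bounded distortion on a full-branch Markov partition, so its ergodicity follows from the classical folklore theorem for piecewise-expanding Markov maps, and ergodicity lifts back to $\bar S$ on all of $(0,\pi)$ by Kac's theorem, since $\nu$-almost every orbit visits $[\delta,\pi-\delta]$ infinitely often.
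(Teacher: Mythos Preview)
Your argument and the paper's share the same opening move---conjugate by $x=\cot\theta$ to a bounded interval and exploit the full-branch Markov structure of $\bar S_{K,\pm1}$---but they diverge in how ergodicity is extracted from expansion. The paper computes $|\bar S_{K,\pm1}'(\theta)|>1$ on $X$, argues that the cylinder intervals $I_{j,n}$ shrink, and then asserts that any invariant set $A$ of positive measure \emph{contains} some cylinder $I_{p,q}$, whence $\bar S^{\,q}A=X$. You instead run a Lebesgue density-point argument, observe that $|\bar S'|\to1$ at the endpoints $0,\pi$ (indeed $\bar S(\theta)-\theta\sim c\theta^3$), and therefore pass to a first-return map on a compact core to recover uniform expansion and bounded distortion before invoking the folklore ergodicity theorem.

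Your route is the more robust one. The paper's key step---that a positive-measure measurable set must contain an entire cylinder---is not justified and is false for general measurable sets; the correct substitute is precisely the density-point plus distortion argument you outline. Moreover, the neutral fixed points you identify are exactly what prevent a uniform distortion bound on all of $(0,\pi)$, so the inducing step is genuinely needed, whereas the paper's bare inequality $|\bar S'|>1$ (strict but with infimum $1$) glosses over this. The cost of your approach is that lifting ergodicity from the induced system back to $\bar S$ requires conservativity (a.e.\ recurrence to the core), which you invoke but do not prove; this is an extra ingredient that has to be supplied separately, e.g.\ via an Adler--Weiss type argument or by appealing to Thaler/Aaronson results for interval maps with indifferent fixed points.
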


\begin{proof}
	For the map $S_{K, \pm 1}$ substituting $\cot(\pi \theta_n)$ into $x_n \in \mathbb{R}\backslash B$, one has the induced map 
	$\displaystyle \bar{S}_{K, \pm 1}: X\overset{\rm def}{=}\frac{1}{\pi}\mbox{arccot}(\mathbb{R}\backslash B) \to X$ such that
	\begin{equation}
	\theta_{n+1} = \bar{S}_{K, \pm 1}(\theta_n) = \frac{1}{\pi}\cot^{-1}\left\lbrace \pm K \cot(\pi K\theta_n)\right\rbrace.
	\end{equation}
	The Figure \ref{Fig:transform} shows the relation between $\mathbb{R}\backslash B$ and $X$ in the range of $-10<x_n<10$.
	\begin{figure}[h]
	    \centering
	    \includegraphics[width=.8\columnwidth]{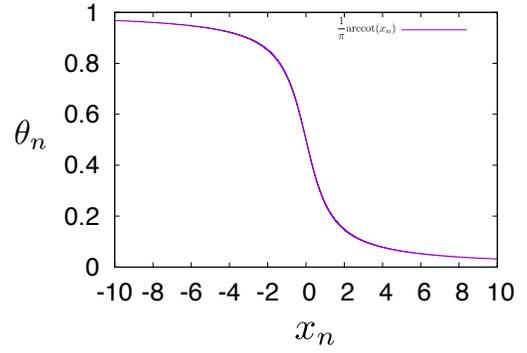}
	    \caption{The Relation between $x_n$ and $\theta_n$.}
	    \label{Fig:transform}
	\end{figure}
	
	Since we eliminate countably infinite number of points whose measure is 0 from $(0,1)$ to obtain the set $X$, $X \subset (0,1)$.
     The map $\bar{S}_{K, \pm 1}$ has topological 
	conjugacy with the map $S_{K, \pm 1}$, so that
	the ergodic properties of $\bar{S}_{K, \pm 1}$ are the same as those of $S_{K, \pm 1}$.
	In terms of absolute value of the derivative of $\bar{S}_{K, \pm 1}$, it holds that
	\begin{equation}
	\left|\bar{S}_{K, \pm 1}'(\theta)\right| = \frac{K^2\left\lbrace 1+\cot^2(\pi K\theta)\right\rbrace }{K^2\cot^2(\pi K \theta)+1}>1, 
	{}^\forall \theta \in X.
	\end{equation}
	Take the contraposition for Definition \ref{Def:ergodicity} and we will show that 
	\begin{equation}
	\begin{array}{lll}
	&&\mbox{for any}~ A~\mbox{s.t.}~\mu(A) \neq 0,~ \mbox{and}~ \mu(A^c) \neq 0\\
	 &\Rightarrow& ~\mbox{$A$ is not invariant}. 
	\end{array}\label{Eq: complement}
	\end{equation}
	In a way similar to the proof of the mixing property in generalized Boole transformations \cite{umeno2016exact} and
	the exactness in super generalized Boole transformations \cite{okubo2018universality}, 
	we define the open intervals $\left\lbrace I_{j,n}\right\rbrace $ for which the following relations hold:
	\begin{equation}
	\begin{array}{rll}
	I_{j, n} &\subset& \left(\eta_{j, n}, \eta_{j+1, n}\right),~ \eta_{j, n} < \eta_{j+1,n},\\
	n &\in& \mathbb{N},\\
	 0 &\leq& j \leq K^n-1,\\
	\eta_{0, n} &=& 0~~\mbox{and}~~\eta_{K^n,n}=1,\\
	\bar{S}_{K, \pm 1}^n (I_{j, n}) &=& X.
	\end{array}
	\end{equation} 
	Figure \ref{Fig: Infinite ergodic interval} illustrates the case of $\{I_{j,1}\}$ 
	for $K=3, 4,$ and $5$ at $\alpha=1$.
	\begin{figure}[H]	
	    \begin{minipage}{\columnwidth}
		    \centering
		    \includegraphics[width=.6\columnwidth]{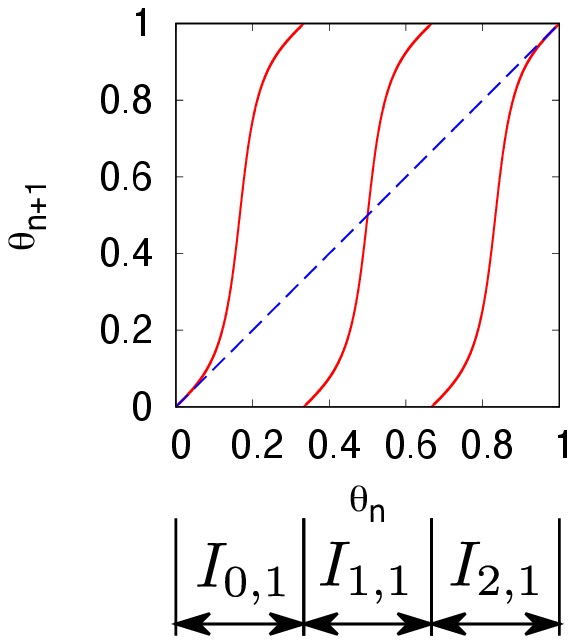}
		    \subcaption{$K=3$}
	    \end{minipage}
	    \begin{minipage}{\columnwidth}
		    \centering
		    \includegraphics[width=.6\columnwidth]{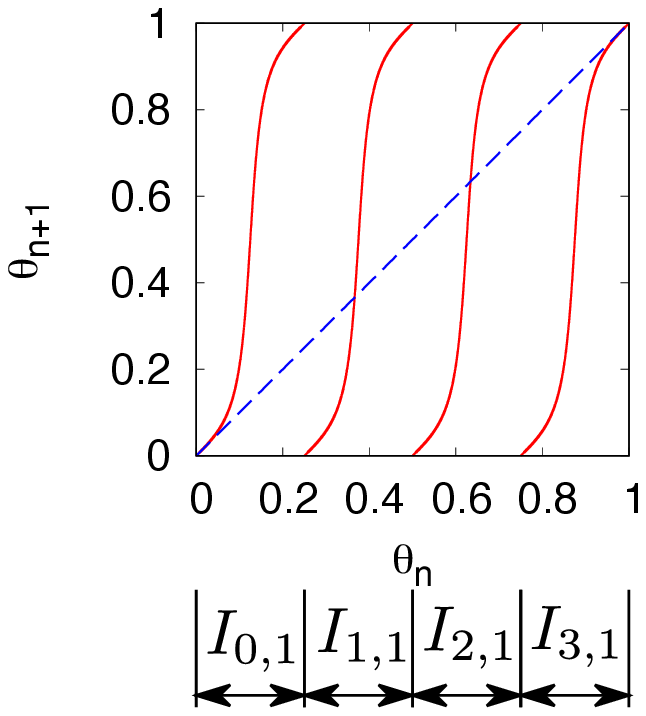}
		    \subcaption{$K=4$}
		\end{minipage}
		\begin{minipage}{\columnwidth}
		    \centering
	    	\includegraphics[width=.6\columnwidth]{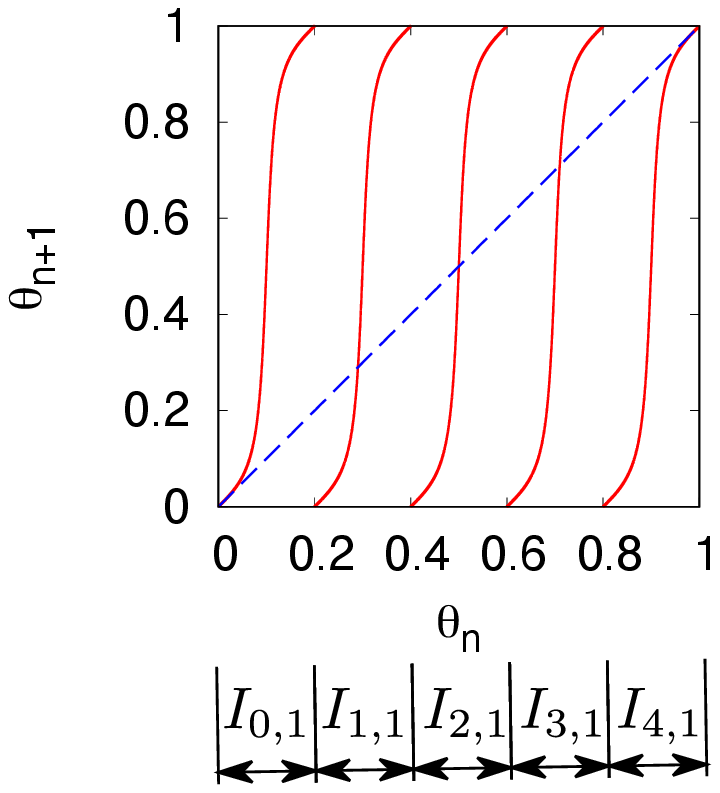}
	    	\subcaption{$K=5$}
	    \end{minipage}
		\caption{
				Solid lines correspond to the transformation $\bar{S}_{K, 1}$ which has exact topological conjugacy with the super generalized Boole transformation $S_{K, 1}$, where $K=3,4$ and $5$. Dashed line corresponds to the line $\theta_{n+1}= \theta_n$. 
			}
			\label{Fig: Infinite ergodic interval}
	\end{figure}
		
		Since the absolute value of the derivative $\bar{S}_{K, \pm 1}'$ on any $I_{j,n}$ is larger than 
		( $\because \{\theta|\cot^2(\pi K\theta)=\infty\} \notin X$) unity,
		the length of the interval $I_{j,n}$ becomes infinitesimal as $n \to \infty$.
		Then, for any set $A$ such that $\mu(A) \neq 0$, it follows that
		\begin{equation}
		{}^\exists p,q~\mbox{s.t.} I_{p,q} \subset A.
		\end{equation}
		From the definition of $I_{p,q}$, it follows that
		\begin{equation}
		\begin{array}{lll}
		\bar{S}_{K, \pm 1}^q I_{p,q} &=& X,\\
		\therefore \bar{S}_{K, \pm 1}^q A &=& X.
		\end{array}
		\end{equation}
		Next, for any set $A$ such that $\mu(A^c) \neq 0$, it follows that
		\begin{equation}
		A \neq X.
		\end{equation}
		Then, for any $A$ such that $\mu(A) \neq 0$ and $\mu(A^c) \neq 0$, it follows that
		\begin{equation}
		{}^\exists q \in \mathbb{N}~\mbox{s.t.}~ \bar{S}_{K, \pm 1}^q A = X~\mbox{and}~ A\neq X.
		\end{equation}
		This means that the set $A$ is \textit{not} invariant. 
		Therefore, Theorem \ref{Theo:Infinit ergodicity} holds.
	\end{proof}
	
\section{Normalized Lyapunov exponent}
According to the Darling-Kac-Aaronson theorem \cite{aaronson1997introduction}, for infinite measure $m$, for a conservative, ergodic, measure presrving map $T$ 
and for a function $f$ such as $f\in L^1(m), f\geq0, \int_Xfdm>0$ where $X$ is a set on which the map $T$ is defined,
normalized time average of $f$ converges to the normalized \textit{Mittag-Leffler distribution} such as \cite{akimoto2010subexponential,akimoto2010role,akimoto2015generalized}
\begin{equation}
\frac{1}{a_n}\sum_{k=0}^{n-1}f\circ T^k \to \left(\int_X fdm\right)Y_\gamma,
\end{equation}
where $a_n$ is the return sequence and $Y_\gamma$ is a random variable which obeys the normalized Mittag-leffler distribution of order $\gamma$.
In the case of the Boole transformation, the return sequence is obtained as $a_n = \frac{\sqrt{2n}}{\pi}$ \cite{aaronson1997introduction}.

In the case of this SGB transformations at $\alpha= \pm 1$, consider $f$ as $\log\left|\frac{dS_{K, \pm1}}{dx}\right|$
and we clarify whether the notmalized Lyapunov exponent converges to the normalized Mittag-Leffler distribution
by numerical simulation.

We have that $\log\left|\frac{dS_{K, \pm1}}{dx}\right| \geq 0$\cite{okubo2018universality}.
In the folloing, we assume such condition as
\begin{equation}
\begin{array}{lll}
a_n &\propto& n^\frac{1}{2},\\
\log\left|S_{K, \pm 1}'\right| &\in& L^1(\mu).
\end{array}
\end{equation}
as the case $(K, \alpha) = (2, 1)$ \cite{aaronson1997introduction}.

We calculate the normalized Lyapunov exponents such as
\begin{equation}
\lambda = \frac{c(K)}{\sqrt{n}}\sum_{i=0}^{n-1}\log\left|\frac{dS_{K, \pm 1}}{dx}(x_i)\right|
\end{equation}
where $c$ are the normalization constants to make the mean values equal to unity.
Figures \ref{Fig:normalized Lyapunov K=3 alpha=1}, \ref{Fig:normalized Lyapunov K=4 alpha=1},
\ref{Fig:normalized Lyapunov K=5 alpha=1} \ref{Fig:normalized Lyapunov K=3 alpha=-1},
\ref{Fig:normalized Lyapunov K=4 alpha=-1} and \ref{Fig:normalized Lyapunov K=5 alpha=-1} show the density function of the normalized Lyapunov exponents 
for $(K, \alpha) = (3,1), (4,1)$, $(5,1)$, $(3,-1)$, $(4,-1)$ and $(5,-1)$, respectively,
which confirms that their normalized Lyapunov exponents are \textit{distributed} according to the normalized Mittag-Leffler distribution of order $\frac{1}{2}$.

\begin{figure}[H]
	\begin{minipage}{\columnwidth}
    	\centering
	    \includegraphics[width=.75\columnwidth]{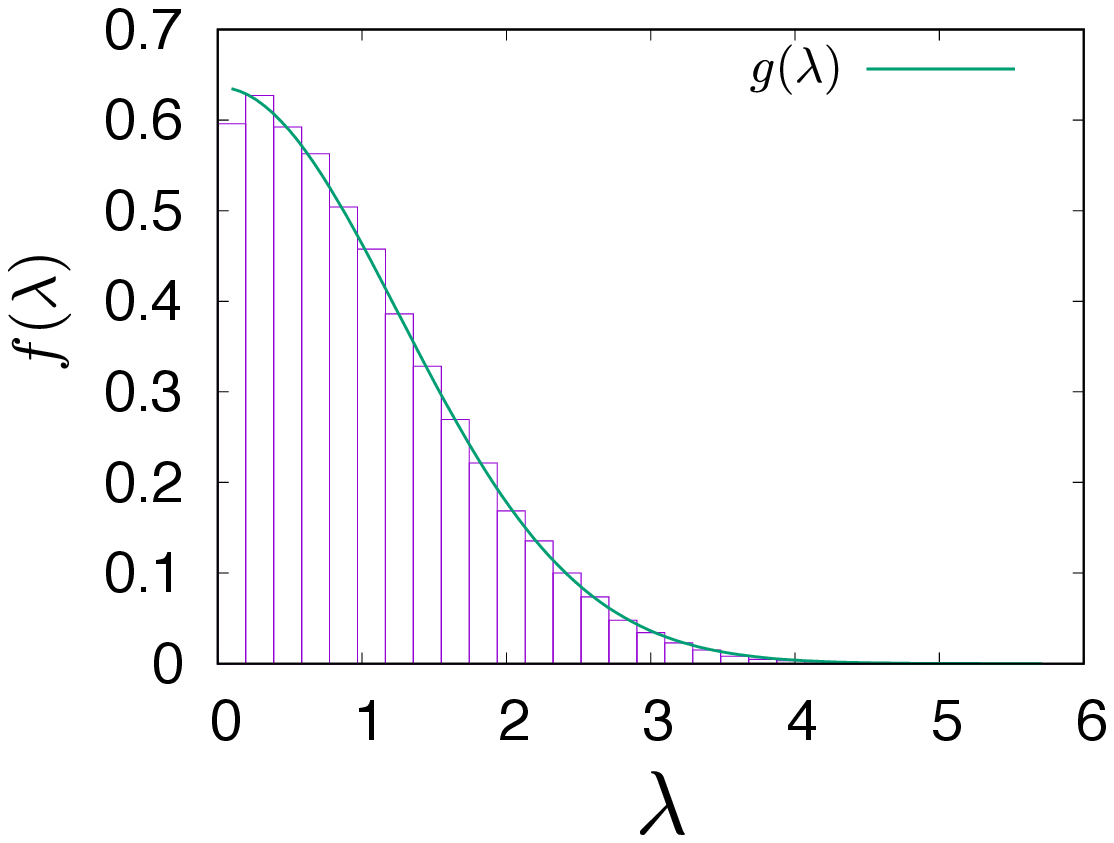}
	    \subcaption{$K=3$}
    	\label{Fig:normalized Lyapunov K=3 alpha=1} 
	\end{minipage}\\
	\begin{minipage}{\columnwidth}
	    \centering
	    \includegraphics[width=.75\columnwidth]{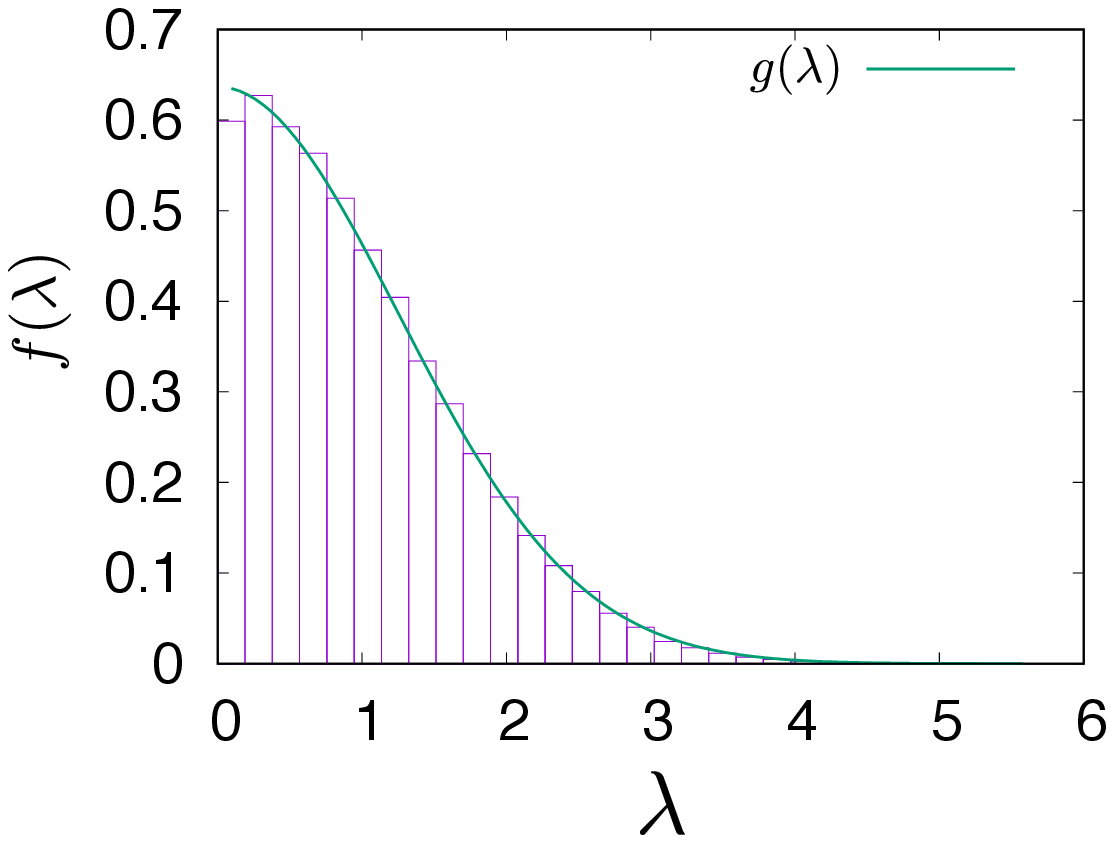}
		\subcaption{$K=4$}
		\label{Fig:normalized Lyapunov K=4 alpha=1} 
	\end{minipage}\\
	\begin{minipage}{\columnwidth}
		\centering
		\includegraphics[width=.75\columnwidth]{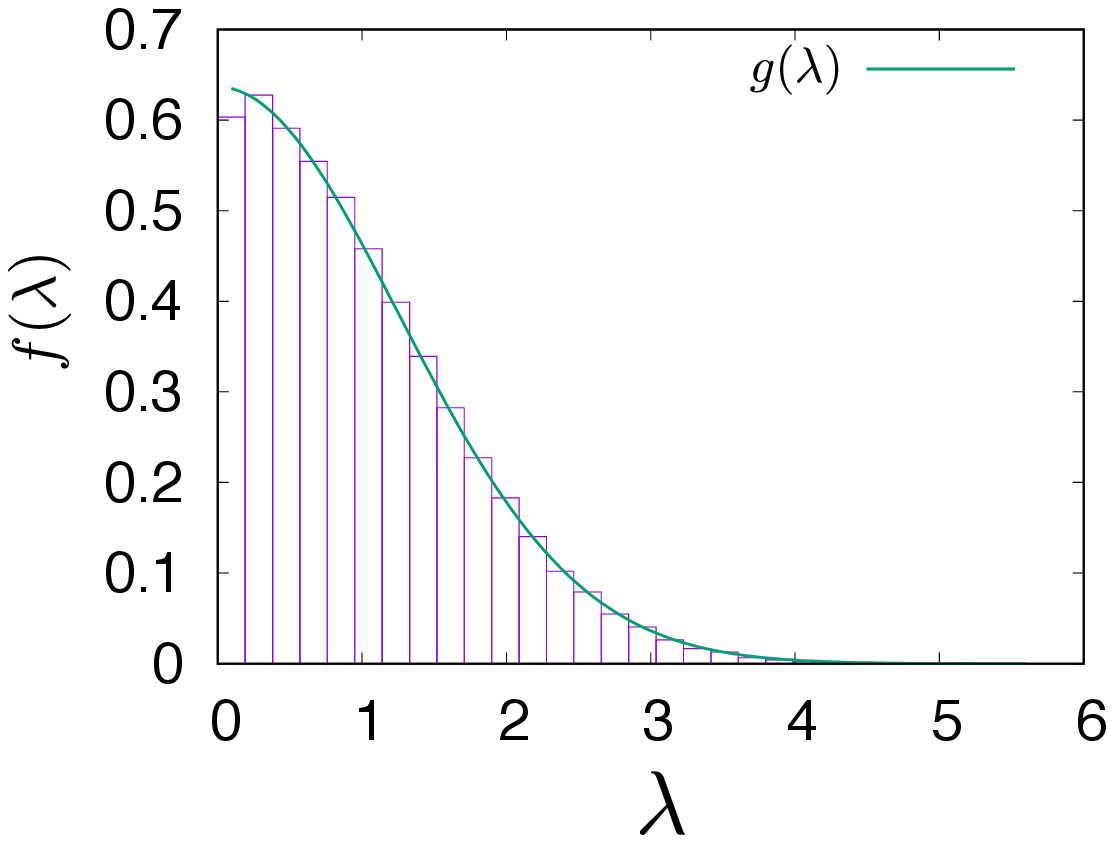}
		\subcaption{$K=5$}
		\label{Fig:normalized Lyapunov K=5 alpha=1} 
	\end{minipage}
	\caption{Relation between the density functions of normalized Lyapunov exponent and normalized Lyapunov exponent in SGB transformation for $K=3, 4 ~\mbox{and}~ 5 (\alpha=1)$.
	The number of initial points is $M=10^5$ and the number of iteration is $N=10^5$.
	Initial points are distributed to obey the normal distribution whose mean and variance are 0 and 1, respectively. The bar graph represents the numerical simulation of the normalized Lyapunov exponents and the solid line represents the normalized Mittag-Leffler distributions of order $\frac{1}{2}$.}
\end{figure}

\begin{figure}[H]
	\begin{minipage}{\columnwidth}
    	\centering
	    \includegraphics[width=.75\columnwidth]{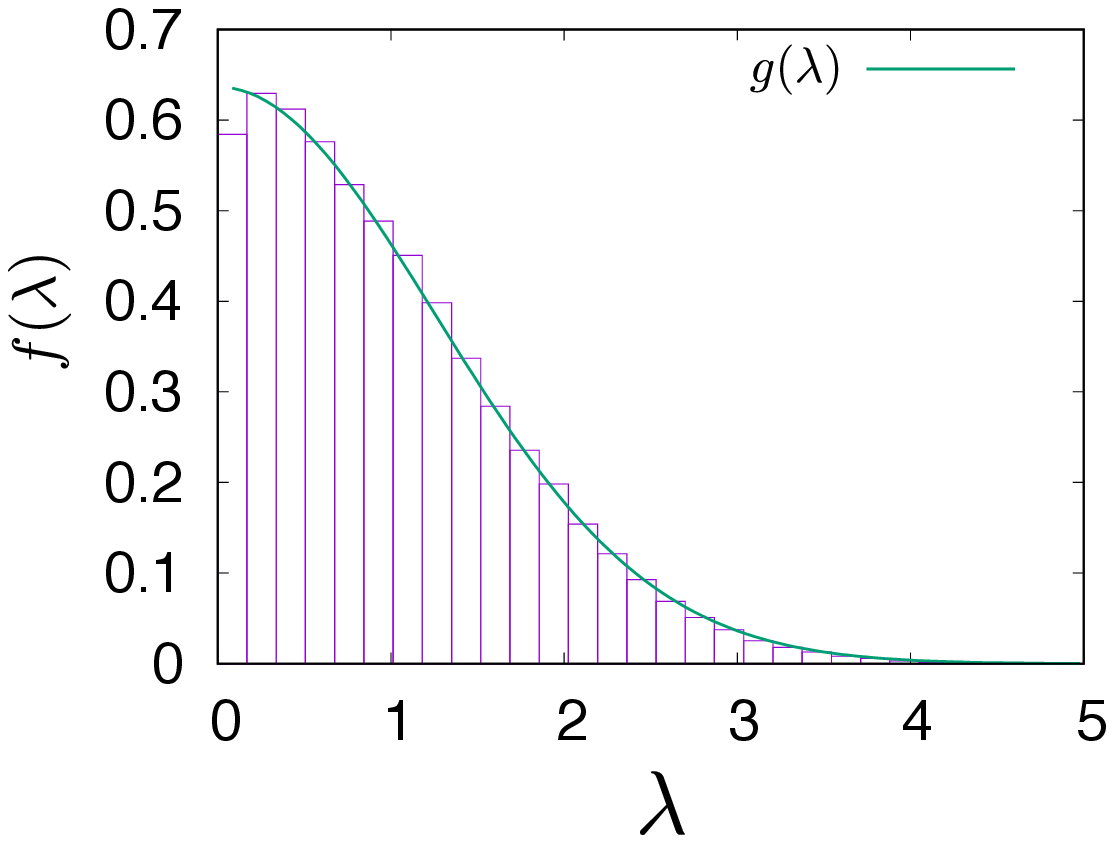}
	    \subcaption{$K=3$}
    	\label{Fig:normalized Lyapunov K=3 alpha=-1} 
	\end{minipage}\\
	\begin{minipage}{\columnwidth}
	    \centering
	    \includegraphics[width=.75\columnwidth]{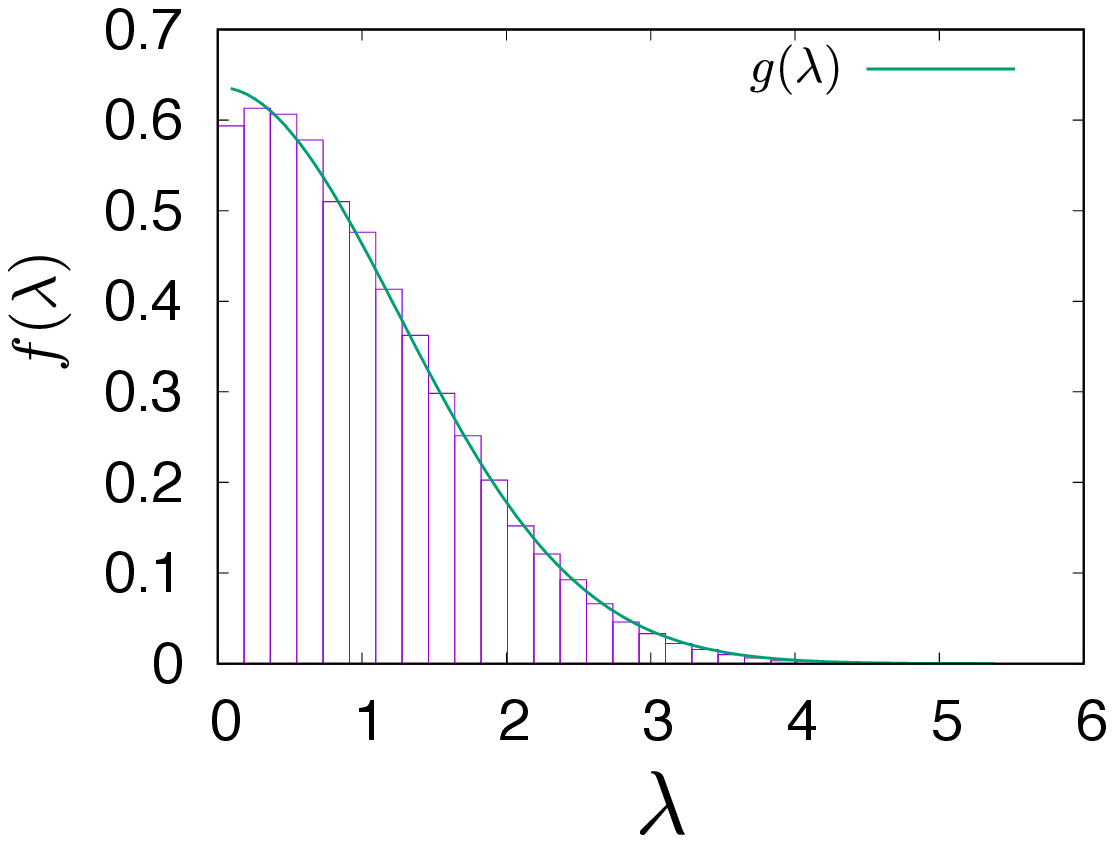}
		\subcaption{$K=4$}
		\label{Fig:normalized Lyapunov K=4 alpha=-1} 
	\end{minipage}\\
	\begin{minipage}{\columnwidth}
		\centering
		\includegraphics[width=.75\columnwidth]{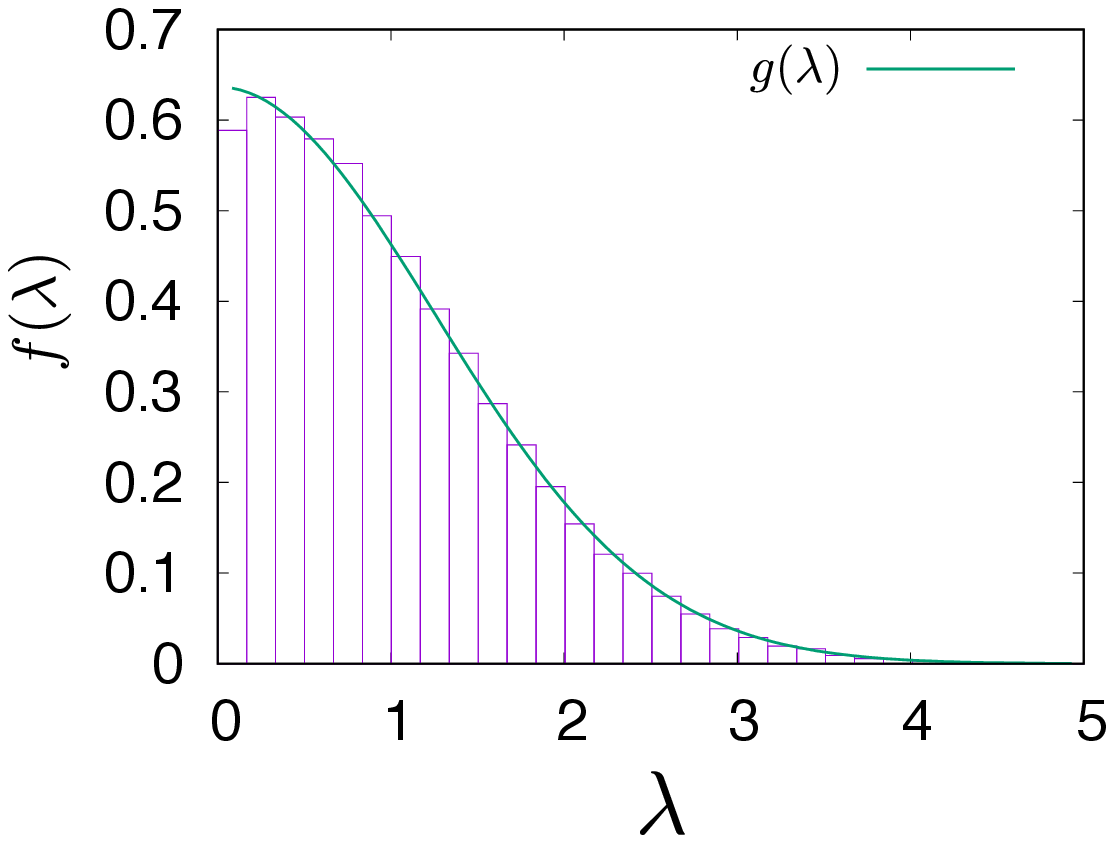}
		\subcaption{$K=5$}
		\label{Fig:normalized Lyapunov K=5 alpha=-1} 
	\end{minipage}
	\caption{Relation between the density functions of normalized Lyapunov exponent and normalized Lyapunov exponent in SGB transformation for $K=3, 4 ~\mbox{and}~ 5 (\alpha=-1)$.
	The number of initial points is $M=10^5$ and the number of iteration is $N=10^5$.
	Initial points are distributed to obey the normal distribution whose mean and variance are 0 and 1, respectively. The bar graph represents the numerical simulation of the normalized Lyapunov exponents and the solid line represents the normalized Mittag-Leffler distributions of order $\frac{1}{2}$.}
\end{figure}

Figure \ref{Fig: c and K} shows the relation between normalization constant $c(K)$ and $K$ at $\alpha=\pm 1$. 
We can see that $c(K)$ tends to decrease as $K$ increases.
At $(K,\alpha)=(2,1)$, we know that $c(K) = \frac{1}{2\sqrt{2}} \simeq 0.354$ from $a_n = \frac{\sqrt{2n}}{\pi}$ \cite{akimoto2010subexponential}. Figure \ref{Fig: c and K} is consistent with this result and from the fact that
the points at $(K,\alpha)=(2,-1), (3,1)$ and $(3,-1)$ are on $g(K)$ and that $\int\ln|S'_{2,-1}(x)|dx=\int\ln|S'_{3,\pm1}(x)|dx = 2\pi$, we conjecture 
that for $S_{2,-1}$, the return sequence $a_n$ is given by $a_n = \frac{\sqrt{2n}}{\pi}$
and that  
for $S_{3,\pm1}$, $a_n = \frac{\sqrt{3n}}{\pi}$.

\begin{figure}[H]
    \centering
    \includegraphics[width=\columnwidth]{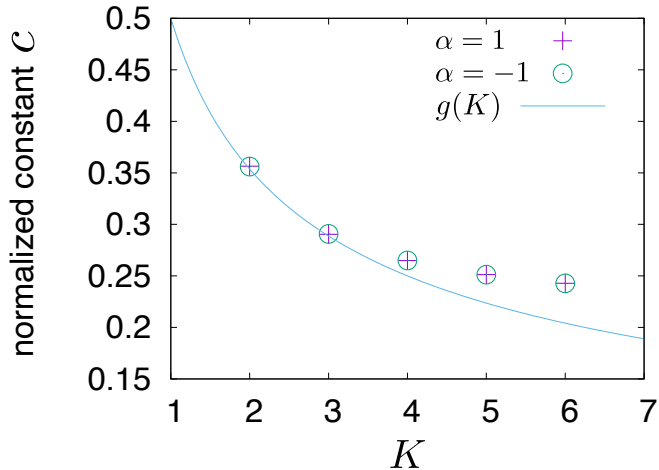}
    \caption{The relation between normalization constant $c(K)$ and parameter $K$. The function $g(K)$ is rewritten as $g(K) = \frac{1}{2\sqrt{K}}$.} 
    \label{Fig: c and K}
\end{figure}

\section{Conclusion}
In this paper, we showed the statistical ergodic property of one dimensional chaotic maps, the super generalized Boole (SGB) transformations $S_{K, \alpha}$ at $\alpha=\pm1$. That is, for infinite number of $K$, we proved that the $S_{K, \pm 1}$ preserve the Lebesgue measure and that the dynamical systems are \textit{ergodic} for $K \geq 2$.
In the case of $K=2$ (the Boole transformation), Adler and Wiss proved its ergodicity in unbounded region \cite{adler1973ergodic} but
in our method, we proved the ergodicity by transforming the unbounded domain to the bounded domain using topological conjugacy.
In the previous work \cite{okubo2018universality}, the authors proved that the SGB transformations are \textit{exact} for $0<\alpha<1$($K=2N$) or $\frac{1}{K^2}<\alpha<1$ ($K=2N+1$), $N\in \mathbb{N}$ and they are dissipative for $\alpha>1$. The result of this paper connects these two regions in the same way of the generalized Boole transformations \cite{umeno2016exact}.
Then, we demonstrated that the normalized Lyapunov exponents actually obey the Mittag-Leffler distribution of order $\frac{1}{2}$
for $(K, \alpha) = (3, 1), (4,1)$, $(5, 1)$, $(3,-1)$, $(4,-1)$ and $(5,-1)$.
In this numerical experiments, the form of Mittag-Leffler distribution does not depend on the value of $K$ although
there is a relation between $c$ and $K$.
 Owing to these results, we obtain a class of countably infinite number of critical maps in the sense of \textit{Type} 1 or \textit{Type} 3 intermittency
which preserve the Lebesgue measure and are proven to be ergodic with respect to the Lebesgue measure.

In previous works various indicators were proposed to characterize the instablity when the corresponding Lyapunov exponent is zero
such as generalized Lyapunov exponent \cite{korabel2009pesin,korabel2010separation}and Lyapunov pair \cite{akimoto2010subexponential}.
It is fully expected that  the these infinite critical SGB transformations will be used as represented indicator maps in order to detect chaotic criticality since the ergodic properties are exactly obtained. 

\begin{acknowledgments}
One of the author, Ken-ichi Okubo, thanks to Dr. Takuma Akimoto for his fruitful advice.
\end{acknowledgments}

\bibliography{infinite-SGB2}

\end{document}